\newtheorem{theorem}{Theorem}[section]
\newtheorem{lemma}[theorem]{Lemma}
\theoremstyle{definition}
\numberwithin{equation}{section}
 \theoremstyle{plain}    
 \theoremstyle{remark}    
 \newtheorem*{acknowledgement*}{Acknowledgement} 
\newcommand{\cF}{\mathcal{F}}
\newcommand{\cP}{\mathcal{P}}
\newcommand{\cT}{\mathcal{T}}
\newcommand{\cX}{\mathcal{X}}
\newcommand{\te}{{\theta}}
\newcommand{\Te}{{\Theta}}
\newcommand{\Om}{{\Omega}}
\newcommand{\om}{{\omega}}
\newcommand{\del}{{\delta}}
\newcommand{\Del}{{\Delta}}
\newcommand{\gam}{{\gamma}}
\newcommand{\Gam}{{\Gamma}}
\newcommand{\sig}{{\sigma}}
\newcommand{\al}{{\alpha}}
\newcommand{\be}{{\beta}}
\newcommand{\La}{{\Lambda}}
\newcommand{\bbE}{{\mathbb E}}
\newcommand{\bbP}{{\mathbb P}}
\newcommand{\bbR}{{\mathbb R}}
\newcommand{\bbI}{{\mathbb I}}
\newcommand{\bfz}{{\bf z}}
\newcommand{\bfu}{{\bf u}}
\theoremstyle{plain}
\numberwithin{equation}{section}
\begin{document}
\title[]{Hedging of game options in discrete markets with transaction costs}%
 \vskip 0.1cm
 \author{ Yuri Kifer\\
\vskip 0.1cm
Institute of Mathematics\\
Hebrew University\\
Jerusalem, Israel}%
\address{
 Institute of Mathematics, The Hebrew University, Jerusalem 91904, Israel}%
\email{  kifer@math.huji.ac.il}%

\thanks{Partially supported by the ISF grant no. 82/10}
\subjclass[2000]{Primary 91B28:  Secondary: 60G40, 91B30 }%
\keywords{ game options, transaction costs, superhedging, shortfall risk, 
Dynkin games }%

 \date{\today}
\begin{abstract}\noindent
 We construct algorithms for computation of prices and superhedging strategies
 for game options in general
 discrete time markets with transaction costs both from seller's (upper
 arbitrage free price) and buyer's (lower arbitrage free price) points of view.
\end{abstract}
\maketitle
\markboth{ Y.Kifer}{Game options with transaction costs}
\renewcommand{\theequation}{\arabic{section}.\arabic{equation}}
\pagenumbering{arabic}

\section{Introduction}\label{sec1}
Game options introduced in \cite{Ki} were studied by now in scores of
papers but their investigation for markets with transaction costs remains
on its initial stages (see \cite{Do1} and \cite{Do2}). In this paper we
 extend to the game options case the
 theory of pricing and hedging of options in general discrete markets with
 proportional transaction costs in the form which was previously developed for
 American options case in \cite{CJ}, \cite{BT} and \cite{RZ}. 
 
 It is well known that pricing in markets with transaction costs becomes 
 somewhat similar to pricing in incomplete markets so that hedging arguments 
 from the option seller's and buyer's point of view lead to different prices
 which determine the whole range of arbitrage free prices. An interesting but
 not surprising feature of game options is the almost complete symmetry
 between the seller's and the buyer's pricing approaches which will be
 demonstrated clearly in the statements of results and their proofs in this
 paper.
 
 We will derive 
 representations both for upper (ask, seller's) and lower (bid, buyer's) 
 hedging prices of game options with transaction costs and will exhibit
 dynamical programming type algorithms for their computation, as well, as
 for computation of corresponding seller's and buyer's superhedging strategies.
 We demonstrate for game options only analogues of some of representations
 and algorithms from \cite{RZ} concentrating on those which allow symmetric
 expositions of seller's and buyer's cases. Others can be obtained in a 
 similar way and
 their inclusion here would make this paper too overloaded and more difficult
 to read.
 
 Superhedging requires an option seller to invest a large sum into his
 hedging portfolio and in some circumstances he may prefer to accept some
  risk setting up a portfolio with smaller initial amount. In fact, we will
  define in Section \ref{sec5} the shortfall risk both for the seller and 
  the buyer showing that also in the study of the risk for game options 
  the symmetry between these two market participants can be preserved.
  Dynamical programming type algorithms for computation of shortfall risks
  and corresponding partial hedging strategies in the setup of game options
  with transaction costs can be obtained similarly to \cite{Do1} where they
  were derived for American options in binomial markets with the motivation
   to approximate the
  shortfall risk of the continuous time Black-Scholes model.

  \section{Preliminaries}\label{sec2}

We will deal with the same market model as in \cite{RZ} which consists of a 
finite probability space $\Om$ with the $\sig$-field $\cF=2^\Om$ of all 
subspaces of $\Om$ and a probability measure $\bbP$ on $\cF$ giving a 
positive weight $\bbP(\om)$
  to each $\om\in\Om$. The setup includes also a filtration $\{\emptyset,
  \Om\}=\cF_0\subset\cF_1\subset ...\subset\cF_T=\cF$ where $T$ is a positive
   integer called the time horizon. It is convenient to denote by $\Om_t$ the
   set of atoms in $\cF_t$ so that any $\cF_t$-measurable random variable 
   (vector) $Z$ can be identified with a function (vector function) defined on
    $\Om_t$ and its value at $\mu\in\Om_t$ will be denoted either by $Z(\mu)$
    or by $Z^\mu$.
    The points of $\Om_t$ can be viewed as vertices of a tree so that an arrow
    is drawn from $\mu\in\Om_t$ to $\nu\in\Om_{t+1}$ if $\nu\subset\mu$.
     
     The market model consists of a risk-free bond and a risky stock. 
     Without loss of generality, we can assume that all prices are discounted
     so that the bond price equals 1 all the time and a position in bonds is
     identified with cash holding. On the other hand, the shares of the stock
     can be traded which involves proportional transaction costs. This will be
     represented by bid-ask spreads, i.e. shares can be bought at an ask price
     $S^a_t$ or sold at the bid price $S^b_t$, where $S_t^s\geq S^b_t>0,\,
     t=0,1,...,T$ are processes adapted to the filtration $\{\cF_t\}_{t=o}^T$.
     
     The liquidation value at time $t$ of a portfolio $(\gam,\del)$ consisting
     of an amount $\gam$ of cash (or bond) and $\del$ shares of the stock
     equals
     \begin{equation}\label{2.1}
     \te_t(\gam,\del)=\gam+S^b_t\del^+-S^a_t\del^-
     \end{equation}
     which in case $\del<0$ means that a potfolio owner should spend the
     amount $S^a_t\del^-$ in order to close his short position. Observe
     that fractional numbers of shares are allowed here so that both $\gam$
     and $\del$ in a portfolio $(\gam,\del)$ could be, in priciple, any real
     numbers. By definition, a self-financing portfolio strategy is a 
     predictable
     process $(\al_t,\be_t)$ representing positions in cash (or bonds) and
     stock at time $t,\, t=0,1,...,T$ such that 
     \begin{equation}\label{2.2}
     \te_t(\al_t-\al_{t+1},\be_t-\be_{t+1})\geq 0\quad\forall t=0,1,...,T-1
     \end{equation}
     and the set of all such portfolio strategies will be denoted by $\Phi$.
     
     Recall that a game (or Israeli) option (contingent claim) introduced
     in \cite{Ki} is defined as a contract between its seller and buyer such
     that both have the right to exercise it at any time up to a maturity 
     date (horizon) $T$. If the buyer exercises the contract at time $t$, then
     he receives from the seller a payment $Y_t$ while if the latter 
     exercises at time $t$ then his payment to the buyer becomes $X_t\geq
     Y_t$ and the difference $\Del_t=X_t-Y_t$ is interpreted as a penalty for 
     the contract cancellation. In the presence of transaction costs there
     is a difference whether we stipulate that the option to be settled in cash 
     or both in cash and shares of stock while in the former case an assumption
     concerning transaction costs in the process of portfolio liquidation
     should be made. We adopt here the setup where the payments $X_t$ and $Y_t$
     are made both in cash and in shares of the stock and transaction costs
     take place always when a portfolio adjustment occurs. Thus, the payments
     are, in fact, adapted random 2-vectors $X_t=(X_t^{(1)},X_t^{(2)})$ and
     $Y_t=(Y_t^{(1)},Y_t^{(2)})$ where the first and the second coordinates
     represent, respectively, a cash amount to be payed and a number of stock
     shares to be delivered and as we allow also fractional numbers of shares
     both coordinates can take on any nonnegative real value. The inequality
     $X_t\geq Y_t$ in the zero transaction costs case is replaced in our 
     present setup by
     \begin{equation}\label{2.3}
     \Del_t=\te_t(X^{(1)}_t-Y_t^{(1)},X_t^{(2)}-Y_t^{(2)})\geq 0
     \end{equation}
     and $\Del_t$ is interpreted as a cancellation penalty. We impose also 
     a natural assumption that $X^{(1)}_T=Y^{(1)}_T$ and $X^{(2)}_T=Y^{(2)}_T$,
     i.e. on the maturity date there is no penalty. Therefore, if
     the seller cancells the contract at time $s$ while the buyer exercises
     at time $t$ the former delivers to the latter a package of cash and
     stock shares which can be represented as a 2-vector in the form
     \begin{equation}\label{2.4}
     Q_{s,t}=(Q^{(1)}_{s,t},Q^{(2)}_{s,t})=X_s\bbI_{s<t}+Y_t\bbI_{t\leq s}
     \end{equation}
     where $\bbI_A=1$ if an event $A$ occurs and $\bbI_A=0$ if not. It will
     be convenient to allow the payment components $X_t^{(1)},\, X_t^{(2)}$
     and $Y_t^{(1)},\, Y_t^{(2)}$ to take on any real (and not only
     nonnegative) values which will enable us to demonstrate complete 
     duality (symmetry) between the seller's and the buyer's positions.
     
     A pair $(\sig,\pi)$ of a stopping time $\sig\leq T$ and of a 
     self-financing strategy $\pi=(\al_t,\be_t)^T_{t=0}$ will be called a
     superhedging strategy for the seller of the game option with a payoff
     given by (\ref{2.4}) if for all $t\leq T$,
     \begin{equation}\label{2.5}
     \te_{\sig\wedge t}(\al_{\sig\wedge t}-Q^{(1)}_{\sig,t},\,
     \be_{\sig\wedge t}-Q^{(2)}_{\sig,t})\geq 0
     \end{equation}
     where, as usual, $c\wedge d=\min(c,d)$ and $c\vee d=\max(c,d)$. The
     seller's (ask or upper hedging) price $V^a$ of a game option is defined 
     as the infimum of initial amounts required to start a superhedging
     strategy for the seller. Since in order to get $\al_0$ amount of
     cash and $\be_0$ shares of stock at time 0 the seller should spend
     \begin{equation}\label{2.6}
     -\te_0(-\al_0,-\be_0)=\al_0+\be_0^+S_0^a-\be_0^-S^b_0
     \end{equation}
     in cash, we can write
     \begin{equation}\label{2.7}
     V^a=\inf_{\sig,\pi}\{-\te_0(-\al_0,-\be_0):\, (\sig,\pi)\,\,\mbox{with}
     \,\,\pi=(\al_t,\be_t)_{t=0}^T\,\,\mbox{being a superhedging strategy for 
     the seller}\}.
     \end{equation}
     
     On the other hand, the buyer may borrow from a bank an amount
     $\te_0(-\al_0,-\be_0)$ to purchase a game option with the payoff
     (\ref{2.4}) and starting with the negative valued portfolio $(\al_0,
     \be_0)$ to manage a self-financing strategy $\pi=(\al_t,\be_t)^T_{t=0}$
     so that for a given stopping time $\tau\leq T$ and all $s\leq T$,
     \begin{equation}\label{2.8}
      \te_{s\wedge\tau}(\al_{s\wedge\tau}+Q^{(1)}_{s,\tau},\,\be_{s\wedge\tau}
      +Q^{(2)}_{s,\tau})\geq 0.
     \end{equation}
     In this case the pair $(\tau,\pi)$ will be called a superhedging strategy
     for the buyer. The buyer's (bid or lower hedging) price $V^b$ of the
     game option above is defined as the supremum of initial bank loan 
     required to purchase this game option and to manage a superhedging
     strategy for the buyer. Thus,
     \begin{equation}\label{2.9}
     V^b=\sup_{\tau,\pi}\{\te_0(-\al_0,-\be_0):\, (\tau,\pi)\,\,\mbox{with}
     \,\,\pi=(\al_t,\be_t)_{t=0}^T\,\,\mbox{being a superhedging strategy for 
     the buyer}\}.
     \end{equation}
     It follows from the representations of Theorem \ref{thm3.1} below that
     $V^a\geq V^b$.
     
     The goal of this paper is to obtain representations of $V^a$ and
     $V^b$ in the form of $\inf\sup$ expressions and to
     construct backward and forward induction algorithms for computation
     both of these prices and of corresponding superhedging strategies.
     These results will be stated precisely in the next section.
     As in the case of American options with transaction costs in \cite{CJ},
     \cite{BT} and \cite{RZ} precise statements of our results involve the
     notion of randomized stopping times and approximate martingales which
     will be introduced in the next section.

     \section{Superhedging and price representations: statements}\label{sec3}

First, we recall the notion of a randomized stopping time (see \cite{CJ},
\cite{BT}, \cite{RZ} and references there) which is defined as a nonnegative
adapted process $\chi$ such that $\sum_{t=0}^T\chi_t=1$. The set of all 
randomized stopping times will be denoted by $\cX$ while the set of all 
usual or pure stopping times will be denoted by $\cT$. It will be convenient
to identify each pure stopping time $\tau$ with a randomized stopping time
$\chi^\tau$ such that $\chi^\tau_t=\bbI_{\{\tau=t\}}$ for any $t=0,1,...,T$,
so that we could write $\cT\subset\cX$. For any adapted process $Z$ and each
randomized stopping time $\chi$ the time-$\chi$ value of $Z$ is defined by
\begin{equation}\label{3.1}
Z_\chi=\sum_{t=0}^T\chi_tZ_t.
\end{equation}

Considering a game option with a payoff given by (\ref{2.4}) we
write also
\begin{equation}\label{3.2}
Q_{\chi,\tilde\chi}=\sum_{s,t=0}^T\chi_s\tilde\chi_tQ_{s,t}
\end{equation}
which is the seller's payment to the buyer when the former cancells and
the latter exercises at randomized stopping times $\chi$ and $\tilde\chi$,
respectively. In particular, if $\sig$ and $\tau$ are pure stopping times
then
\begin{equation}\label{3.3}
Q_{\chi,\chi^\tau}=\sum_{s=0}^T\chi_sQ_{s,\tau}\,\,\mbox{and}\,\, Q_{\chi^\sig,
\chi}=\sum_{t=0}^T\chi_tQ_{\sig,t}.
\end{equation}
We can also define the "minimum" and the "maximum" of two randomized stopping
times $\chi$ and $\tilde\chi$ which are randomized stopping times $\chi\wedge
\tilde\chi$ and $\chi\vee\tilde\chi$ given by
\begin{eqnarray}\label{3.4}
&(\chi\wedge\tilde\chi)_t=\chi_t\sum^T_{s=t}\tilde\chi_s+\tilde\chi_t\sum^T_{s=t+1}
\chi_s\,\,\mbox{and}\\
&(\chi\vee\tilde\chi)_t=\chi_t\sum^t_{s=0}\tilde\chi_s+\tilde\chi_t
\sum^{t-1}_{s=0}\chi_s.\nonumber
\end{eqnarray}
In particular, if $\sig$ and $\tau$ are pure stopping times then 
\begin{equation}\label{3.5}
\chi^\sig\wedge\chi^\tau=\chi^{\sig\wedge\tau},\,\,\chi^\sig\vee\chi^\tau=
\chi^{\sig\vee\tau}
\end{equation}
and for any adapted process $Z$,
\begin{equation}\label{3.6}
Z_{\chi\wedge\chi^\tau}=\sum_{s=0}^T\chi_sZ_{s\wedge\tau},\,\,
Z_{\chi^\sig\wedge\chi}=\sum_{t=0}^T\chi_tZ_{\sig\wedge t}
\end{equation}
and similarly for $\chi\vee\chi^\tau$ and $\chi^\sig\vee\chi$.

Next, we introduce the notion of an approximate martingale which is defined
for any randomized stopping time $\chi$ as a pair $(P,S)$ of a probability
 measure $P$ on $\Om$ and of an adapted process $S$ such that for each
 $t=0,1,...,T$,
 \begin{equation}\label{3.7}
 S_t^b\leq S_t\leq S_t^a\,\,\mbox{and}\,\,\chi_{t+1}^*S^b_t\leq\bbE_P(
 S^{\chi^*}_{t+1}|\cF_t)\leq\chi_{t+1}^*S^a_t
 \end{equation}
 where $\bbE_P$ is the expectation with respect to $P$,
 \begin{equation}\label{3.8}
 \chi^*_t=\sum_{s=t}^T\chi_s,\, Z_t^{\chi^*}=\sum_{s=t}^T\chi_sZ_s,\,
 \chi^*_{T+1}=0\,\,\mbox{and}\,\, Z_{T+1}^{\chi^*}=0.
 \end{equation}
 Given a randomized stopping time $\chi$ the space of corresponding
  approximate martingales $(P,S)$ will be denoted by $\bar\cP(\chi)$ and
  we denote by $\cP(\chi)$ the subspace of $\bar\cP(\chi)$ consisting of
  pairs $(P,S)$ with $P$ being equivalent to the original (market)
  probability $\bbP$.
  
  Now we can formulate some of our results which exhibit ask and bid price
  representations for game options.
  \begin{theorem}\label{thm3.1} In the above notations,
  \begin{eqnarray}\label{3.9}
  &V^a=\min_{\sig\in\cT}\max_{\chi\in\cX}\max_{(P,S)\in\bar\cP(\chi)}
  \bbE_P\big(Q^{(1)}_{\sig,\cdot}+SQ^{(2)}_{\sig,\cdot}\big)_\chi\\
  &=\min_{\sig\in\cT}\max_{\chi\in\cX}\sup_{(P,S)\in\cP(\chi)}
  \bbE_P\big(Q^{(1)}_{\sig,\cdot}+SQ^{(2)}_{\sig,\cdot}\big)_\chi\nonumber
  \end{eqnarray}
and
\begin{eqnarray}\label{3.10}
  &V^b=\max_{\tau\in\cT}\min_{\chi\in\cX}\min_{(P,S)\in\bar\cP(\chi)}
  \bbE_P\big(Q^{(1)}_{\cdot,\tau}+SQ^{(2)}_{\cdot,\tau}\big)_\chi\\
  &=\max_{\tau\in\cT}\min_{\chi\in\cX}\inf_{(P,S)\in\cP(\chi)}
 \bbE_P\big(Q^{(1)}_{\cdot,\tau}+SQ^{(2)}_{\cdot,\tau}\big)_\chi\nonumber
  \end{eqnarray}
  where $Q^{(1)}_{\sig,\cdot},\, Q^{(2)}_{\sig,\cdot}$ and
   $Q^{(1)}_{\cdot,\tau},\, Q^{(2)}_{\cdot,\tau}$ denote functions on
   $\{ 0,1,...,T\}$ whose values at $t$ are obtained by replacing $\cdot$
   by $t$. 
\end{theorem}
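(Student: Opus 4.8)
The plan is to fix the seller's cancellation time, reduce the game-option superhedging problem to an American-option one, apply a finite-dimensional duality, and then recover the buyer's representation by symmetry.

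First I would fix $\sig\in\cT$ in the definition (2.7). With $\sig$ held fixed, the superhedging requirement (2.5) demands that the self-financing strategy $\pi$ dominate, in the liquidation sense of $\te_{\sig\wedge t}$, the package $Q_{\sig,t}$ at every time $\sig\wedge t$; since $Q_{\sig,t}=X_\sig$ for $t>\sig$ and $Q_{\sig,t}=Y_t$ for $t\leq\sig$ by (2.4), this is precisely the problem of superhedging an American-type claim whose time-$t$ exercise payoff is $Q_{\sig,t}$. Writing $U^a(\sig)$ for the associated minimal initial cost $-\te_0(-\al_0,-\be_0)$, the definition (2.7) becomes $V^a=\min_{\sig\in\cT}U^a(\sig)$, the minimum being attained because $\cT$ is finite.

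The heart of the argument is the single-$\sig$ representation
\[
U^a(\sig)=\max_{\chi\in\cX}\max_{(P,S)\in\bar\cP(\chi)}\bbE_P\big(Q^{(1)}_{\sig,\cdot}+SQ^{(2)}_{\sig,\cdot}\big)_\chi,
\]
which is an American-option superhedging duality for the payoff $(Q_{\sig,t})_t$ and may be quoted from \cite{RZ} or re-derived here. For a direct proof, the inequality with $U^a(\sig)$ on the larger side follows from a supermartingale argument: since $\te_u(\gam,\del)\leq\gam+S_u\del$ whenever $S^b_u\leq S_u\leq S^a_u$, the self-financing constraints (2.2) together with the approximate-martingale inequalities (3.7) turn a $\chi^*$-weighted form of $\al_t+S_t\be_t$ into a $P$-supermartingale dominating the payoff, and optional sampling at the randomized time $\chi$ bounds $-\te_0(-\al_0,-\be_0)$ below by the stated expectation. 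For the matching upper bound I would use finiteness of $\Om$: the superhedging strategies form a polyhedral cone, $U^a(\sig)$ is the value of a linear program, and computing its dual identifies the dual-feasible points with exactly the pairs $(\chi,(P,S))$ having $(P,S)\in\bar\cP(\chi)$, so that the absence of a finite-dimensional duality gap yields equality. Taking $\min_\sig$ produces the first line of (3.9).

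To pass to the second line of (3.9), the inclusion $\cP(\chi)\subset\bar\cP(\chi)$ gives one inequality, while for the other I would perturb any $(P,S)\in\bar\cP(\chi)$ by setting $P_\ve=(1-\ve)P+\ve\bbP$, which is equivalent to $\bbP$ since $\bbP$ charges every atom, and adjusting $S$ within $[S^b,S^a]$ so that $(P_\ve,S_\ve)\in\cP(\chi)$; letting $\ve\to 0$ shows the supremum over $\cP(\chi)$ equals the maximum over $\bar\cP(\chi)$. The buyer's representation (3.10) then follows from the symmetry stressed in the introduction: condition (2.8) is condition (2.5) with $Q$ replaced by $-Q$ and the two players' times interchanged, so rerunning the same reduction and duality with these sign changes turns every $\min$ into $\max$ and every $\sup$ into $\inf$ and replaces $\sig$ by $\tau$; comparing the two representations at a common approximate martingale gives $V^a\geq V^b$. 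I expect the \emph{main obstacle} to be the upper-bound half of the single-$\sig$ step: identifying the dual cone of the superhedging polyhedron precisely with the approximate-martingale conditions (3.7), and in particular producing the randomized-stopping-time weights $\chi^*$ correctly, is the delicate part, whereas the reduction over $\sig$ and the equivalent-measure approximation are routine.
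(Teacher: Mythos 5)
Your proposal is correct and follows essentially the same route as the paper: fix the cancellation time $\sig$, recognize the resulting problem as superhedging the American option with payoff process $Q_{\sig,\cdot}$, invoke the Roux--Zastawniak duality (Theorem 3.3 of \cite{RZ}, which already contains both the $\bar\cP(\chi)$ and the equivalent-measure $\cP(\chi)$ versions), take $\min_{\sig\in\cT}$, and obtain the buyer's formula (\ref{3.10}) by passing to the reflected payoff $\hat Q_{s,t}=-Q_{s,t}$ with the two players' roles interchanged. The only difference is presentational: the paper interleaves this argument with the dynamic-programming functions $z_{\sig,t}$ of Theorem \ref{thm3.2} so as to prove Theorems \ref{thm3.1}--\ref{thm3.3} simultaneously, whereas you isolate the price representation itself.
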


In order to exhibit dynamical programming algorithms for computation of $V^a$
and $V^b$ and induction algorithms producing corresponding superhedging
strategies we have to introduce first some convex analysis notions and
notations (see \cite{Ro} and \cite{RZ} for more details). Denote by
$\Te$ the family of functions $f:\,\bbR\to\bbR\cup\{-\infty\}$ such that
either $f\equiv -\infty$ or $f$ is a (finite) real valued polyhedral
(continuous piecewise linear with finite number of segments) function. If
$f,g\in\Te$ then, clearly, $f\wedge g,\, f\vee g\in\Te$. The epigraph of
$f\in\Te$ is defined by epi$(f)=\{(x,y)\in\bbR^2:\, x\geq f(y)\}$. For any
$c\geq d$ the function $h_{[d,c]}(y)=cy^--dy^+$, clearly, belongs to $\Te$.
Observe that the self-financing condition (\ref{2.2}) can be rewritten in
the form
\begin{equation}\label{3.11}
(\al_t-\al_{t+1},\be_t-\be_{t+1})\in\,\mbox{epi}(h_{[S_t^b,S_t^a]}).
\end{equation}
For each $f\in\Te$ and $c\geq d$ there exists a unique function 
gr$_{[d,c]}(f)\in\Te$ such that
\begin{equation}\label{3.12}
\mbox{epi(gr}_{[d,c]}(f))=\mbox{epi}(h_{[d,c]})+\mbox{epi}(f).
\end{equation}

It is clear from (\ref{3.11}) and (\ref{3.12}) that portfolios in 
epi(gr$_{[S^b_t,S^a_t]}(f))$ are precisely those which can be rebalanced in 
a self-financing manner at time $t$ to yield a portfolio in epi$(f)$. Denote
by $\La$ the family of all convex functions in $\Te$  and by $\Gam$ the 
family of concave functions $v:\,\bbR\to\bbR\cup\{-\infty\}$ which are
polyhedral on their essential domain dom$(v)=\{ x\in\bbR:\, v(x)>-\infty\}$.
For any $f\in\La$ the convex duality sais that
\begin{equation}\label{3.13}
f^*(x)=\inf_{y\in\bbR}(f(y)+xy)\in\Gam\,\,\mbox{and}\,\, f(y)=\sup_{x\in\bbR}
(f^*(x)-xy),
\end{equation}
the infimum and the supremum above are attained whenever they are finite.

For any $y\in\bbR,\,\mu\in\Om_t$ and $t=0,1,...,T$ define $q_t^a(y)=
q_t^a(\mu,y)$, $q_t^b(y)=q_t^b(\mu,y)$, $r_t^a(y)=r_t^a(\mu,y)$, $r_t^b(y)
=r_t^b(\mu,y)$ by
\begin{eqnarray*}
&q^a_t(y)=X^{(1)}_t+h_{[S^b_t,S^a_t]}(y-X_t^{(2)}),\,\, 
r^a_t(y)=Y^{(1)}_t+h_{[S^b_t,S^a_t]}(y-Y_t^{(2)})\\
&q^b_t(y)=-X^{(1)}_t+h_{[S^b_t,S^a_t]}(y+X_t^{(2)}),\,\, 
r^b_t(y)=-Y^{(1)}_t+h_{[S^b_t,S^a_t]}(y+Y_t^{(2)})
\end{eqnarray*}
with $h_{[d,c]}$ the same as in (\ref{3.11}) and (\ref{3.12}). Observe that 
if $c\geq d\geq 0$ then either $h_{[d,c]}\equiv 0$ or $h_{[d,c]}$ is a monotone
 decreasing function, and so
\begin{equation}\label{3.14}
q^a_t\geq r^a_t\,\,\mbox{and}\,\, q^b_t\leq r^b_t.
\end{equation}
Introduce also
\begin{eqnarray}\label{3.15}
&G_{s,t}(y)=Q^{(1)}_{s,t}+h_{[S^b_{s\wedge t},S^a_{s\wedge t}]}(y-
Q_{s,t}^{(2)})=q^a_s(y)\bbI_{s<t}+r^a_t(y)\bbI_{t\leq s}\,\,\mbox{and}\\
&H_{s,t}(y)=-Q^{(1)}_{s,t}+h_{[S^b_{s\wedge t},S^a_{s\wedge t}]}(y+
Q_{s,t}^{(2)})=q^b_s(y)\bbI_{s<t}+r^b_t(y)\bbI_{t\leq s}.\nonumber
\end{eqnarray}
Clearly, the superhedging conditions (\ref{2.5}) of the seller and (\ref{2.8})
of the buyer are equivalent to
\begin{equation}\label{3.16}
(\al_{\sig\wedge t},\be_{\sig\wedge t})\in\,\mbox{epi}(G_{\sig,t})\,\,
\mbox{for all}\,\, t=0,1,...,T\,\,\,\mbox{and}
\end{equation}
\begin{equation}\label{3.17}
(\al_{s\wedge\tau},\be_{s\wedge\tau})\in\,\mbox{epi}(H_{s,\tau})\,\,
\mbox{for all}\,\, s=0,1,...,T,
\end{equation}
respectively. Observe also that
\begin{equation}\label{3.18}
 q_t^a(0)=-q^b_t(0)=\te_t(X_t^{(1)},
X_t^{(2)})\,\,\mbox{and}\,\, r_t^a(0)=-r^b_t(0)=\te_t(Y_t^{(1)},Y_t^{(2)}).
\end{equation}
We recall that $X^{(1)}_T=Y^{(1)}_T$ and $X^{(2)}_T=Y^{(2)}_T$, and so 
$q^a_T=r^a_T$ and $q^b_T=r^b_T$. The following result provides dynamical
programming algorithms for ask and bid prices computations.

\begin{theorem}\label{thm3.2} (i) For any $x\in\bbR$, $\mu\in\Om_T$ and 
$\sig\in\cT$ define
\begin{equation}\label{3.19}
z^\mu_T(x)=w_T^\mu(x)=r^a_T(\mu,x)\,\,\mbox{and}\,\, z^\mu_{\sig,T}(x)=w_{\sig,T}^\mu
(x)=G_{\sig,T}(\mu,x).
\end{equation}
Next, for $t=1,2,...,T$ and each $\mu\in\Om_{t-1}$ define by backward
induction
\begin{eqnarray}\label{3.20}
&\bfz^\mu_{t-1}=\max_{\nu\subset\mu,\,\nu\in\Om_t}z_t^\nu,\,\,
\bfz^\mu_{\sig,t-1}=\max_{\nu\subset\mu,\,\nu\in\Om_t}z_{\sig,t}^\nu,\\
&w^\mu_{t-1}=\mbox{gr}_{[S^b_{t-1}(\mu),S^a_{t-1}(\mu)]}(\bfz^\mu_{t-1}),\,\,
w^\mu_{\sig,t-1}=\mbox{gr}_{[S^b_{t-1}(\mu),S^a_{t-1}(\mu)]}(\bfz^\mu_{\sig,t-1}
),\nonumber\\
&z^\mu_{t-1}(x)=\min\big(q^a_{t-1}(\mu,x),\max(r^a_{t-1}(\mu,x),w^\mu_{t-1}(x))
\big)\,\,\mbox{and}\,\,
z^\mu_{\sig,t-1}(x)=\max(G_{\sig,t-1}(\mu,x),w^\mu_{\sig,t-1}(x)).
\nonumber\end{eqnarray}
Then $z_0(0)=\min_{\sig\in\cT}z_{\sig,0}(0)=V^a$.

(ii) For any $x\in\bbR$, $\mu\in\Om_T$ and $\tau\in\cT$ define
\begin{equation}\label{3.21}
u^\mu_T(x)=v^\mu_T(x)=r^b_T(\mu,x)\,\,\mbox{and}\,\, u^\mu_{T,\tau}(x)=
v_{T,\tau}^\mu(x)=H_{T,\tau}(\mu,x).
\end{equation}
Next, for $t=1,2,...,T$ and each $\mu\in\Om_{t-1}$ define by the backward
 induction
 \begin{eqnarray}\label{3.22}
&\bfu^\mu_{t-1}=\max_{\nu\subset\mu,\,\nu\in\Om_t}u_t^\nu,\,\,
\bfu^\mu_{t-1,\tau}=\max_{\nu\subset\mu,\,\nu\in\Om_t}u_{t,\tau}^\nu,\\
&v^\mu_{t-1}=\mbox{gr}_{[S^b_{t-1}(\mu),S^a_{t-1}(\mu)]}(\bfu^\mu_{t-1}),\,\,
v^\mu_{t-1,\tau}=\mbox{gr}_{[S^b_{t-1}(\mu),S^a_{t-1}(\mu)]}(\bfu^\mu_{t-1,
\tau}),\nonumber\\
&u^\mu_{t-1}(x)=\min\big(r^b_{t-1}(\mu,x),\max(q^b_{t-1}(\mu,x),v^\mu_{t-1}(x))
\big)\,\,\mbox{and}\,\, u^\mu_{t-1,\tau}(x)=\max(H_{t-1,\tau}(\mu,x),
v^\mu_{t-1,\tau}(x)).
\nonumber\end{eqnarray}
Then $u_0(0)=\max_{\tau\in\cT}u_{0,\tau}(0)=-V^b$.
\end{theorem}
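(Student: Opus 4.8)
I would prove (i) by a backward induction that identifies the epigraphs of the functions $z^\mu_{\sig,t}$ and $z^\mu_t$ with sets of feasible superhedging portfolios, and then read off the price at $t=0$; part (ii) then follows by the same argument with the roles of seller and buyer interchanged, exploiting the symmetry already visible in (\ref{3.15})--(\ref{3.18}). The two things to establish are $\min_{\sig\in\cT}z_{\sig,0}(0)=V^a$ (each $\sig$ fixed) and the Dynkin-game identity $z^\mu_t=\min_{\sig\geq t}z^\mu_{\sig,t}$.

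\textbf{Fixed cancellation time.} For a fixed $\sig\in\cT$ I would show, by downward induction on $t$, that for $t\leq\sig$ the set $\mbox{epi}(z^\mu_{\sig,t})$ is exactly the set of portfolios the seller may hold at the node $\mu\in\Om_t$ and from which, using a self-financing strategy on the subtree rooted at $\mu$, condition (\ref{3.16}) can be met at all later times; the base case is (\ref{3.19}). The three operations in (\ref{3.20}) correspond one-to-one to what a feasible incoming portfolio at $\mu\in\Om_{t-1}$ must do: the maximum over children $\nu\subset\mu$ encodes that the predictable ($\cF_{t-1}$-measurable, hence branch-independent) portfolio chosen for the next period must be feasible on every branch; the operation $\mbox{gr}_{[S^b_{t-1},S^a_{t-1}]}$ encodes the self-financing rebalancing at time $t-1$ via (\ref{3.11})--(\ref{3.12}); and the outer $\max$ with $G_{\sig,t-1}$ imposes the time-$(t-1)$ constraint (\ref{3.16}).

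\textbf{The delicate point: the cancellation time.} Because (\ref{3.16}) at every $t>\sig$ constrains the single portfolio $(\al_\sig,\be_\sig)$, the obligation $q^a_\sig$ must be made to appear at time $\sig$ even though the immediate term $G_{\sig,\sig}$ equals only $r^a_\sig$. The observation that makes the recursion correct is the idempotency $\mbox{gr}_{[S^b_\sig,S^a_\sig]}(q^a_\sig)=q^a_\sig$: the epigraph of $q^a_\sig$ is a translate of the cone $\mbox{epi}(h_{[S^b_\sig,S^a_\sig]})$, and adding this cone to itself reproduces it, so the values $q^a_\sig$ carried at the nodes with $t>\sig$ transport the cancellation obligation undiminished back to time $\sig$, where (\ref{3.14}) gives $\max(r^a_\sig,q^a_\sig)=q^a_\sig$. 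The same cone remark at $t=0$ shows that $z_{\sig,0}$ is unchanged by $\mbox{gr}_{[S^b_0,S^a_0]}$ at the origin, so that the acquisition cost $-\te_0(-\al_0,-\be_0)$ of the initial portfolio is already built into the recursion and $z_{\sig,0}(0)$ equals the least initial capital superhedging the option when the seller cancels at $\sig$. Minimizing over $\sig$ and invoking (\ref{2.7}) yields $\min_{\sig\in\cT}z_{\sig,0}(0)=V^a$.

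\textbf{The Dynkin-game identity and conclusion.} Finally I would prove $z^\mu_t=\min_{\sig\geq t}z^\mu_{\sig,t}$ by downward induction. Splitting the stopping times $\sig\geq t-1$ according to whether $\sig=t-1$ (which gives $q^a_{t-1}$ at the cancellation node) or $\sig\geq t$, and using that a stopping time on the subtree of $\mu$ restricts to \emph{independent} stopping times on the subtrees of the children, one obtains $\min_{\sig}\max_{\nu}z^\nu_{\sig,t}=\max_{\nu}\min_{\sig}z^\nu_{\sig,t}=\max_{\nu}z^\nu_t$; the interchange is exact because for independent choices $\min_{a,b}\max(f(a),g(b))=\max(\min f,\min g)$. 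Since $\mbox{gr}$ is an inf-convolution and $\max(r^a_{t-1},\cdot)$ is monotone, both commute with the pointwise minimum over the finitely many $\sig$, so the right side collapses to $\min(q^a_{t-1},\max(r^a_{t-1},w^\mu_{t-1}))=z^\mu_{t-1}$, matching (\ref{3.20}). Evaluating at $t=0$, $x=0$ gives $z_0(0)=\min_{\sig\in\cT}z_{\sig,0}(0)=V^a$. The hardest part throughout is the correct bookkeeping of the cancellation time in the fixed-$\sig$ induction; once the gr-idempotency of the cone-shaped payoff functions $q^a,r^a$ is in hand, the min--max interchange and the symmetric treatment of (ii) are routine.
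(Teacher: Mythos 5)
Your outline is correct in substance, but it takes a genuinely different route from the paper's proof. The paper establishes the two bounds asymmetrically: the inequality $V^a\leq z_0(0)$ comes from Lemma \ref{lem4.1}(i), whose proof runs the explicit strategy-construction algorithm of Theorem \ref{thm3.3}(i) (so the same induction simultaneously yields Theorem \ref{thm3.3}); the reverse inequality comes from the one-sided estimate $z_{\sig,s}\geq z_{\sig\wedge s}$ in (\ref{4.8}) together with a citation of Theorem 3.3 of \cite{RZ}, which is invoked to identify $V^a_\sig$ with $z_{\sig,0}(0)$ for the American option with payoff process $Q_{\sig,\cdot}$. You instead (a) re-prove that fixed-$\sig$ identification from scratch by a feasible-set induction on the tree, and (b) prove the exact functional identity $z_t=\min_{\sig\geq t}z_{\sig,t}$ at every node by interchanging $\min_\sig$ with the maximum over children (legitimate, as you say, because a stopping time $\sig\geq t$ restricts to independent stopping times on the disjoint subtrees) and with $\mbox{gr}$ and $\max(r^a_{t-1},\cdot)$ (legitimate because inf-convolution and pointwise max commute with finite pointwise minima). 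The paper never proves this exact identity, only the inequality (\ref{4.8}). Your route is self-contained and purely convex-analytic/dynamic-programming in nature, needing neither the strategy construction nor the external American-option theorem; what it gives up is that it re-derives \cite{RZ} rather than reusing it, and it does not by itself produce the explicit superhedging strategies of Theorem \ref{thm3.3}, which the paper's argument yields as a by-product and needs anyway.

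Two points in your sketch need repair, though neither is fatal. First, in your ``delicate point'': at times $t>\sig$ the recursion (\ref{3.20}) applies $\mbox{gr}_{[S^b_t,S^a_t]}$ with the \emph{time-$t$} bid-ask spread, and $\mbox{gr}_{[S^b_t,S^a_t]}(q^a_\sig)\neq q^a_\sig$ in general, since the cones differ; cone idempotency only applies when the spread is the same. What actually keeps the recursion equal to $q^a_\sig$ on $\{t>\sig\}$ is the outer maximum with $G_{\sig,t}=q^a_\sig$ in (\ref{3.20}) combined with the fact that $\mbox{gr}$ never increases a function. The same-cone idempotency $\mbox{gr}_{[S^b_\sig,S^a_\sig]}(q^a_\sig)=q^a_\sig$ is needed, and holds, exactly at $t=\sig$: there it gives $z_{\sig,\sig}=\max(r^a_\sig,q^a_\sig)=q^a_\sig$ via (\ref{3.14}), and it is also what shows that the fixed-$\sig$ game constraints (\ref{3.16}), which bind the frozen portfolio after $\sig$, price the same as the American constraints in which trading continues. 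Second, your assertion that $z_{\sig,0}$ is unchanged by $\mbox{gr}_{[S^b_0,S^a_0]}$ at the origin is true but not an instance of idempotency of a single cone; it needs a short argument, e.g. subadditivity of $h_{[S^b_0,S^a_0]}$ applied to each building block of the recursion (or one can note that $G_{\sig,0}$ and $\mbox{gr}_0(\bfz_{\sig,0})$ are each stable under adding the time-0 cone, and that stability passes to their maximum since epigraphs intersect). With these two repairs your plan goes through and delivers $z_0(0)=\min_{\sig\in\cT}z_{\sig,0}(0)=V^a$, and the duality $\hat Q=-Q$ handles part (ii) exactly as in the paper.
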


Next, we describe inductive constructions of superhedging seller's and 
buyer's strategies using the functions $z_t$ and $u_t$, $t=0,...,T$
constructed in Theorem \ref{thm3.2}.
\begin{theorem}\label{thm3.3} (i) Construct by induction a sequence of (pure) 
stopping times $\sig_t\in\cT$ and a self-financing strategy $(\al,\be)$ such
that
\begin{equation}\label{3.23}
(\al_t,\be_t)\in\mbox{epi}(z_t)\setminus\mbox{epi}(q^a_t)\,\,\mbox{on}\,\,
\{t<\sig_t\}
\end{equation}
for each $t=0,1,...,T$ in the following way. First, take any 
$\cF_0$-measurable portfolio $(\al_0,\be_0)\in\mbox{epi}(z_0)$ and set
\begin{equation*}
   \sig_0=\left\{\begin{array}{ll}
  0 &\mbox{if}\,\, (\al_0,\be_0)\in\mbox{epi}(q_0^a)\\
  T &\mbox{if}\,\, (\al_0,\be_0)\notin\mbox{epi}(q_0^a).
  \end{array}\right.
  \end{equation*}
  Suppose that an $\cF_t$-measurable portfolio $(\al_t,\be_t)\in
  \mbox{epi}(z_t)$ and a stopping time $\sig_t\in\cT$ have already been
   constructed for some $t=0,1,...T-1$ so that (\ref{3.23}) holds true. 
   By (\ref{3.12}) and (\ref{3.20}),
   \[
   (\al_t,\be_t)\in\mbox{epi}(w_t)=\mbox{epi}(h_{[S^b_t,S^a_t]})+\mbox{epi}
   (\bfz_t)\,\,\mbox{on}\,\,\{t<\sig_t\},
   \]
   and so there exists an $\cF_t$-measurable portfolio $(\al_{t+1},\be_{t+1})$
   such that
   \[
   (\al_{t+1},\be_{t+1})\in\mbox{epi}(\bfz_t),
   \,\, (\al_t-\al_{t+1},\be_t-\be_{t+1})\in\mbox{epi}(h_{[S^b_t,S^a_t]})\,\,
   \mbox{on}\,\,\{t<\sig_t\}
   \]
   and $(\al_{t+1},\be_{t+1})=(\al_t,\be_t)$ on $\{ t\geq\sig_t\}$ which
   provides the self-financing condition (\ref{3.11}) both on $\{ t<\sig_t\}$
   and on $\{ t\geq\sig_t\}$. By (\ref{3.20}) it follows also that $(\al_{t+1},
   \be_{t+1})\in\,\mbox{epi}(z_{t+1})$ on $\{ t<\sig_t\}\supset
   \{ t+1<\sig_{t+1}\}$. Set
   \begin{equation*}
   \sig_{t+1}=\left\{\begin{array}{ll}
  \sig_t &\mbox{if}\,\, t\geq \sig_t\\
  t+1 &\mbox{if}\,\, t<\sig_t\,\,\mbox{and}\,\,(\al_{t+1},\be_{t+1})\in
  \mbox{epi}(q_{t+1}^a)\\
  T &\mbox{if}\,\, t<\sig_t\,\,\mbox{and}\,\,(\al_{t+1},\be_{t+1})\notin
  \mbox{epi}(q_{t+1}^a).
  \end{array}\right.
  \end{equation*}
  Finally, set $\sig=\sig_T\in\cT$. Then the pair $(\sig,\pi)$ with
  $\pi=(\al,\be)$ constructed by the above algorithm with $(\al_0,\be_0)=
  (V^a,0)$ is a superhedging strategy for the seller.  
  
  (ii) Construct by induction a sequence of (pure) 
stopping times $\tau_t\in\cT$ and a self-financing strategy $(\al,\be)$ such
that
\begin{equation}\label{3.24}
(\al_t,\be_t)\in\mbox{epi}(u_t)\setminus\mbox{epi}(r^b_t)\,\,\mbox{on}\,\,
\{t<\tau_t\}
\end{equation}
for each $t=0,1,...,T$ in the following way. First, take any 
$\cF_0$-measurable portfolio $(\al_0,\be_0)\in\mbox{epi}(u_0)$ and set
\begin{equation*}
   \tau_0=\left\{\begin{array}{ll}
  0 &\mbox{if}\,\, (\al_0,\be_0)\in\mbox{epi}(r_0^b)\\
  T &\mbox{if}\,\, (\al_0,\be_0)\notin\mbox{epi}(r_0^b).
  \end{array}\right.
  \end{equation*}
  Suppose that an $\cF_t$-measurable portfolio $(\al_t,\be_t)\in
  \mbox{epi}(u_t)$ and a stopping time $\tau_t\in\cT$ have already been
   constructed for some $t=0,1,...T-1$ so that (\ref{3.23}) holds true. 
   By (\ref{3.12}) and (\ref{3.22}),
   \[
   (\al_t,\be_t)\in\mbox{epi}(v_t)=\mbox{epi}(h_{[S^b_t,S^a_t]})+\mbox{epi}
   (\bfu_t)\,\,\mbox{on}\,\,\{t<\tau_t\},
   \]
   and so there exists an $\cF_t$-measurable portfolio $(\al_{t+1},\be_{t+1})$
   such that
   \[
   (\al_{t+1},\be_{t+1})\in\mbox{epi}(\bfu_t),
   \,\, (\al_t-\al_{t+1},\be_t-\be_{t+1})\in\mbox{epi}(h_{[S^b_t,S^a_t]})\,\,
   \mbox{on}\,\,\{t<\tau_t\}
   \]
   and $(\al_{t+1},\be_{t+1})=(\al_t,\be_t)$ on $\{ t\geq\tau_t\}$ which
   provides the self-financing condition (\ref{3.11}) both on $\{ t<\tau_t\}$
   and on $\{ t\geq\tau_t\}$. By (\ref{3.22}) it follows also that $(\al_{t+1},
   \be_{t+1})\in\,\mbox{epi}(u_{t+1})$ on $\{ t<\tau_t\}\supset
   \{ t+1<\tau_{t+1}\}$. Set
   \begin{equation*}
   \tau_{t+1}=\left\{\begin{array}{ll}
  \tau_t &\mbox{if}\,\, t\geq \tau_t\\
  t+1 &\mbox{if}\,\, t<\tau_t\,\,\mbox{and}\,\,(\al_{t+1},\be_{t+1})\in
  \mbox{epi}(r_{t+1}^b)\\
  T &\mbox{if}\,\, t<\tau_t\,\,\mbox{and}\,\,(\al_{t+1},\be_{t+1})\notin
  \mbox{epi}(r_{t+1}^b).
  \end{array}\right.
  \end{equation*}
  Finally, set $\tau=\tau_T\in\cT$. Then the pair $(\tau,\pi)$ with
  $\pi=(\al,\be)$ constructed by the above algorithm with $(\al_0,\be_0)=
  (-V^b,0)$ is a superhedging strategy for the buyer.
   \end{theorem}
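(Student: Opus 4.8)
\emph{Overall approach.} The statement already fixes the strategy and records both self-financing and the two running invariants, so the plan is to check that $(\sig,\pi)$ and $(\tau,\pi)$ meet the seller's and buyer's superhedging requirements in their epigraph forms (\ref{3.16}) and (\ref{3.17}), and that the prescribed starting portfolios cost exactly $V^a$ and $V^b$. The only extra pointwise facts I would isolate are $z_t\geq r^a_t$ and $u_t\geq q^b_t$: by (\ref{3.14}) one has $q^a_t\geq r^a_t$, whence $z_t=\min(q^a_t,\max(r^a_t,w_t))\geq\min(q^a_t,r^a_t)=r^a_t$, and symmetrically $q^b_t\leq r^b_t$ gives $u_t\geq q^b_t$; in epigraph terms $\mbox{epi}(z_t)\subset\mbox{epi}(r^a_t)$ and $\mbox{epi}(u_t)\subset\mbox{epi}(q^b_t)$. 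I would also record that $(\al_t,\be_t)\in\mbox{epi}(z_t)$ persists through the cancellation step, i.e.\ on $\{t\leq\sig\}$, since the portfolio produced at time $\sig$ is the one built while the index was still active; likewise $(\al_t,\be_t)\in\mbox{epi}(u_t)$ on $\{t\leq\tau\}$.

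\emph{Seller's side.} For part (i) I would verify (\ref{3.16}) by splitting on the value of the random cancellation time $\sig=\sig_T$, using $G_{\sig,t}=r^a_t\bbI_{t\leq\sig}+q^a_\sig\bbI_{\sig<t}$ from (\ref{3.15}). On $\{t\leq\sig\}$ one has $\sig\wedge t=t$ and $G_{\sig,t}=r^a_t$, so $(\al_t,\be_t)\in\mbox{epi}(z_t)\subset\mbox{epi}(r^a_t)$ gives the inequality. On $\{t>\sig\}$ the portfolio has been frozen since time $\sig$, hence $(\al_{\sig\wedge t},\be_{\sig\wedge t})=(\al_\sig,\be_\sig)$, while $G_{\sig,t}=q^a_\sig$ is independent of $t$; since the value $\sig<T$ is assigned exactly when $(\al_\sig,\be_\sig)\in\mbox{epi}(q^a_\sig)$ (and at the default $\sig=T$ the set $\{t>\sig\}$ is empty), this membership is precisely the cancellation trigger. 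Combining the two cases yields (\ref{3.16}). Finally $(\al_0,\be_0)=(V^a,0)\in\mbox{epi}(z_0)$ because $z_0(0)=V^a$, and by (\ref{2.6}) its cost is $-\te_0(-V^a,0)=V^a$, so the strategy is admissible and starts from $V^a$.

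\emph{Buyer's side.} Part (ii) is the mirror image: I would split on $\tau=\tau_T$ using $H_{s,\tau}=q^b_s\bbI_{s<\tau}+r^b_\tau\bbI_{\tau\leq s}$. On $\{s<\tau\}$, where $s\wedge\tau=s$ and $H_{s,\tau}=q^b_s$, the invariant $(\al_s,\be_s)\in\mbox{epi}(u_s)\subset\mbox{epi}(q^b_s)$ settles it; on $\{s\geq\tau\}$ the frozen portfolio gives $(\al_{s\wedge\tau},\be_{s\wedge\tau})=(\al_\tau,\be_\tau)$ and $H_{s,\tau}=r^b_\tau$, so the exercise trigger $(\al_\tau,\be_\tau)\in\mbox{epi}(r^b_\tau)$ closes it (in contrast to the seller, here both $s=\tau$ and $s>\tau$ fall under the same trigger). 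With $(\al_0,\be_0)=(-V^b,0)\in\mbox{epi}(u_0)$, coming from $u_0(0)=-V^b$, the loan $\te_0(V^b,0)=V^b$ makes $(\tau,\pi)$ admissible and realizes $V^b$.

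\emph{Main obstacle.} The step I expect to carry the real weight, rather than the bookkeeping above, is confirming that the refined invariants (\ref{3.23}) and (\ref{3.24}) actually propagate, since they are exactly what licenses the self-financing rebalancing on the active set. Concretely, $(\al_t,\be_t)\in\mbox{epi}(z_t)$ with $(\al_t,\be_t)\notin\mbox{epi}(q^a_t)$ forces $z_t(\be_t)=\max(r^a_t(\be_t),w_t(\be_t))\geq w_t(\be_t)$ (the outer minimum in (\ref{3.20}) cannot be attained at $q^a_t$ there), so $(\al_t,\be_t)\in\mbox{epi}(w_t)=\mbox{epi}(h_{[S^b_t,S^a_t]})+\mbox{epi}(\bfz_t)$ by (\ref{3.12}); this is precisely the decomposition producing a predictable $(\al_{t+1},\be_{t+1})\in\mbox{epi}(\bfz_t)\subset\mbox{epi}(z_{t+1})$ together with an admissible trade, and the analogous computation with $u_t,v_t,\bfu_t$ drives part (ii). Verifying the $\cF_t$-measurable selectability of this decomposition, and that the stopping rules keep the refinement alive so that the trigger sets used in the case analysis genuinely coincide with $\{t>\sig\}$ and $\{s\geq\tau\}$, is where I would concentrate the care.
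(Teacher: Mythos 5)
You follow essentially the same route as the paper: the paper proves Theorem \ref{thm3.3} inside the proof of Lemma \ref{lem4.1} (the implications (a)$\Rightarrow$(b)$\Rightarrow$(c) and (d)$\Rightarrow$(e)$\Rightarrow$(f)), by checking the epigraph form (\ref{3.16}), resp.\ (\ref{3.17}), of superhedging for the algorithmic strategy started at an arbitrary point of $\mbox{epi}(z_0)$, resp.\ $\mbox{epi}(u_0)$, and then substituting $(V^a,0)\in\mbox{epi}(z_0)$ and $(-V^b,0)\in\mbox{epi}(u_0)$, which come from the identities $z_0(0)=V^a$ and $u_0(0)=-V^b$ of Theorem \ref{thm3.2}, exactly as you do. Your seller's argument is complete: on $\{t\le\sig\}$ the persistence of $(\al_t,\be_t)\in\mbox{epi}(z_t)$ together with $z_t\ge r^a_t$ handles $G_{\sig,t}=r^a_t$, while $\{t>\sig\}$ is nonempty only when $\sig<T$, and the value $\sig<T$ can be produced only by the trigger branch, so $(\al_\sig,\be_\sig)\in\mbox{epi}(q^a_\sig)$ there; this even dispenses with the paper's separate contradiction argument (\ref{4.2}) needed for $\sig=T$.

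The soft spot is the buyer's case $\{s\ge\tau\}$ on the event $\{\tau=T\}$, which, as you yourself note, is nonempty (it contains $s=T$), unlike the seller's $\{t>\sig\}$. The value $\tau=T$ can a priori be produced by the default branch of the definition of $\tau_{t+1}$ (the branch assigning $T$ when $(\al_{t+1},\be_{t+1})\notin\mbox{epi}(r^b_{t+1})$), and in that case the ``exercise trigger'' you appeal to never fired, so your stated justification does not cover it. The conclusion is still true: the paper ((\ref{4.4})--(\ref{4.5})) shows the default branch at time $T$ is vacuous, because $\tau=T$ forces $T-1<\tau_{T-1}$, and on that event the construction already yields $(\al_T,\be_T)\in\mbox{epi}(\bfu_{T-1})\subset\mbox{epi}(u_T)=\mbox{epi}(r^b_T)$. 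Alternatively, your own preliminary persistence remark closes the hole without any contradiction argument: at $s=\tau=T$ it gives $(\al_T,\be_T)\in\mbox{epi}(u_T)$, and $u_T=r^b_T$ by (\ref{3.21}). Either way an explicit sentence is missing; as written, the buyer's paragraph rests on a trigger that need not have been pulled.
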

   
   \section{Superhedging and price representations: proofs}\label{sec4}

We start with the following result. 
\begin{lemma}\label{lem4.1} (i) The following assertions are equivalent:

(a) $(\gam,\del)\in$ epi$(z_0)$;

(b) There exists a self-financing strategy $(\al,\be)\in\Phi$ and a 
stopping time $\sig\in\cT$ such that $(\al_0,\be_0)=(\gam,\del)$ and
(\ref{3.16}) holds true;

(c) There exists a superhedging strategy $(\sig,\pi),\,\sig\in\cT,\,\pi=
(\al_t,\be_t)_{t=0}^T\in\Phi$ for the seller such that $(\al_0,\be_0)=
(\gam,\del)$.

(ii) The following assertions are equivalent:

(d) $(\gam,\del)\in$ epi$(u_0)$;

(e) There exists a self-financing strategy $(\al,\be)\in\Phi$ and a 
stopping time $\tau\in\cT$ such that $(\al_0,\be_0)=(\gam,\del)$ and
(\ref{3.17}) holds true;

(f) There exists a superhedging strategy $(\tau,\pi),\,\tau\in\cT,\,\pi=
(\al_t,\be_t)_{t=0}^T\in\Phi$ for the buyer such that $(\al_0,\be_0)=
(\gam,\del)$.
\end{lemma}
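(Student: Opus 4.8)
The plan is to establish part (i); part (ii) then follows by the symmetric argument obtained by replacing $z$, $q^a$, $r^a$, $G$ and ``cancellation'' by $u$, $r^b$, $q^b$, $H$ and ``exercise'', respectively, the roles of the larger and the smaller of the two payoff functions being interchanged in view of (\ref{3.14}). Within part (i) the equivalence (b) $\Leftrightarrow$ (c) is immediate: by definition a superhedging strategy for the seller is a pair $(\sig,\pi)$ with $\pi\in\Phi$ self-financing and satisfying (\ref{2.5}), and (\ref{2.5}) was already observed to be equivalent to (\ref{3.16}). Hence the whole content is the equivalence (a) $\Leftrightarrow$ (c), which I would deduce from a localized statement proved by backward induction on the tree.

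The key step is the following claim, proved by backward induction on $t$ from $T$ down to $0$: for every $t$, every $\mu\in\Om_t$ and every portfolio $(\gam,\del)$, one has $(\gam,\del)\in\mbox{epi}(z_t^\mu)$ if and only if there exist a self-financing strategy $(\al_s,\be_s)_{s=t}^T$ on the subtree rooted at $\mu$ with $(\al_t,\be_t)=(\gam,\del)$ and a stopping time $\sig$ with $t\leq\sig\leq T$ such that $(\al_{\sig\wedge s},\be_{\sig\wedge s})\in\mbox{epi}(G_{\sig,s})$ for all $t\leq s\leq T$. Taking $t=0$, where $\Om_0$ is the root, then yields (a) $\Leftrightarrow$ (c). The base case $t=T$ is trivial, since $z_T=r_T^a=q_T^a$ and the only admissible stopping time is $\sig=T$, for which $G_{T,T}=r_T^a$.

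For the inductive step at a node $\mu\in\Om_{t-1}$ I would first record the geometric reading of the recursion (\ref{3.20}): since the epigraph of a minimum (resp.\ maximum) is the union (resp.\ intersection) of epigraphs, and since $\mbox{epi}(w_{t-1}^\mu)=\mbox{epi}(h_{[S^b_{t-1}(\mu),S^a_{t-1}(\mu)]})+\mbox{epi}(\bfz_{t-1}^\mu)$ by (\ref{3.12}) while $\mbox{epi}(\bfz_{t-1}^\mu)=\bigcap_{\nu\subset\mu,\,\nu\in\Om_t}\mbox{epi}(z_t^\nu)$, one has
\[
\mbox{epi}(z_{t-1}^\mu)=\mbox{epi}(q^a_{t-1})\cup\big(\mbox{epi}(r^a_{t-1})\cap\mbox{epi}(w_{t-1}^\mu)\big).
\]
These two alternatives correspond to the seller cancelling at $t-1$ (hence delivering $X_{t-1}$, i.e.\ landing in $\mbox{epi}(q^a_{t-1})$) or continuing (hence covering a possible buyer exercise via $\mbox{epi}(r^a_{t-1})$ and being able to rebalance self-financingly at time $t-1$ into a portfolio lying in $\mbox{epi}(z_t^\nu)$ for every child $\nu$). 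Crucially, because positions are predictable, the post-rebalancing portfolio $(\al_t,\be_t)$ is $\cF_{t-1}$-measurable, hence constant on $\mu$, which is exactly what makes the requirement $(\al_t,\be_t)\in\mbox{epi}(\bfz_{t-1}^\mu)=\bigcap_\nu\mbox{epi}(z_t^\nu)$ the correct one. Using that $\{\sig=t-1\}$ is $\cF_{t-1}$-measurable, hence either contains or is disjoint from the atom $\mu$, the implication $(\Rightarrow)$ is obtained by building $\sig$ and the strategy from the appropriate alternative --- freezing the portfolio after cancellation in the first case, and in the second case splicing together, over the children $\nu$, the strategies and stopping times $\sig^\nu\geq t$ supplied by the inductive hypothesis --- while $(\Leftarrow)$ reads off membership in $\mbox{epi}(q^a_{t-1})$ or in $\mbox{epi}(r^a_{t-1})\cap\mbox{epi}(w_{t-1}^\mu)$ from the superhedging inequalities at $s=t-1$ and, via the inductive hypothesis applied at each child, from those at $s\geq t$.

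The main obstacle is the bookkeeping in the inductive step: one must verify that the strategies and stopping times furnished at the children $\nu\subset\mu$ paste into a genuine self-financing strategy and a genuine stopping time on the subtree at $\mu$, and that the single self-financing rebalancing at time $t-1$ (the Minkowski summand coming from $\mbox{epi}(h_{[S^b_{t-1},S^a_{t-1}]})$) connects $(\gam,\del)$ to the predictable, $\cF_{t-1}$-measurable position $(\al_t,\be_t)\in\mbox{epi}(\bfz_{t-1}^\mu)$. Once the adaptedness and self-financing conditions are checked to hold simultaneously on $\{t-1<\sig\}$ and on $\{t-1\geq\sig\}$ --- exactly as in the construction of Theorem \ref{thm3.3}(i) --- the induction closes and the lemma follows.
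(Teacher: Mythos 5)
Your proposal is correct, but it is organized differently from the paper's proof, and the difference is worth spelling out. The paper splits the equivalence (a)$\Leftrightarrow$(c) into two asymmetric halves: for (a)$\Rightarrow$(b) it runs the \emph{forward-in-time} construction of Theorem \ref{thm3.3}(i), maintaining the invariant (\ref{3.23}) and only afterwards verifying the superhedging condition (\ref{3.16}) through the claim (\ref{4.1}); for (c)$\Rightarrow$(a) it runs a single \emph{global} backward induction on the stopped portfolios, proving (\ref{4.3}) directly for the process $(\al_{\sig\wedge t},\be_{\sig\wedge t})$ rather than node by node. You instead prove one localized two-sided equivalence at every node $\mu\in\Om_t$ by backward induction, obtaining the existence direction by splicing the children's strategies and stopping times and the converse by restriction to subtrees; your explicit identity $\mbox{epi}(z_{t-1}^\mu)=\mbox{epi}(q^a_{t-1})\cup\big(\mbox{epi}(r^a_{t-1})\cap\mbox{epi}(w_{t-1}^\mu)\big)$, together with $\mbox{epi}(\bfz^\mu_{t-1})=\bigcap_{\nu\subset\mu}\mbox{epi}(z^\nu_t)$ and (\ref{3.12}), is exactly what the paper uses implicitly, and making it explicit clarifies why predictability (constancy of $(\al_t,\be_t)$ on the atom $\mu$) is the crucial measurability requirement. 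What your route buys is a cleaner, self-contained equivalence proof in which both directions close in one induction and all measurability issues are local; what it loses is a by-product the paper needs later: the paper's (a)$\Rightarrow$(b) argument is verbatim a proof that the specific forward algorithm of Theorem \ref{thm3.3}(i) produces a superhedging strategy, and the paper cites precisely this when it deduces Theorem \ref{thm3.3}(i) at the end of Section \ref{sec4}. Your backward splicing produces \emph{some} superhedging strategy from any $(\gam,\del)\in\mbox{epi}(z_0)$ (when unfolded it is essentially the same strategy, since your stop-or-continue dichotomy matches the paper's rule), but to recover Theorem \ref{thm3.3}(i) as stated one would still have to check that the forward construction there satisfies your inductive alternative at each step. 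Finally, your treatment of part (ii) by the substitution dictionary $z\to u$, $q^a\to r^b$, $r^a\to q^b$, $G\to H$ (with the roles of the larger and smaller payoff functions swapped, per (\ref{3.14})) is legitimate and is the correct dictionary; the paper happens to write part (ii) out in full rather than invoking symmetry at the level of the lemma, reserving the duality argument $\hat Q=-Q$ for the theorems.
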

\begin{proof} 
(i) Suppose that $(\gam,\del)\in$ epi$(z_0)$. Set $(\al_0,\be_0)=(\gam,\del)$ 
and construct a self-financing strategy $\pi=(\al_t,\be_t)_{t=0}^T$ and
a stopping time $\sig$ inductively as described in Theorem \ref{thm3.3}(i).
In order to obtain (\ref{3.16}) we show first that
\begin{equation}\label{4.1}
(\al_\sig,\be_\sig)\in\,\mbox{epi}(q^a_\sig).
\end{equation}
Indeed, if $\sig=\sig_T=t<T$ then by the construction $\sig_T=\sig_{T-1}=
\cdots =\sig_t,\, t-1<\sig_{t-1}$ and $(\al_t,\be_t)\in$ epi$(q^a_t)$.
If $\sig=\sig_T=T$ then we also obtain that
 \begin{equation}\label{4.2}
(\al_T,\be_T)\in\,\mbox{epi}(q^a_T)
\end{equation}
 since for otherwise by the construction we must have $\sig_{T-1}=\sig_T=T$.
 Then $T-1<\sig_{T-1}$ and on this set (event) by construction
$(\al_T,\be_T)\in$ epi$(\bfz_{T-1})\subset$ epi$(z_T)=$epi$(q^a_T)$ since 
$z_T=r^a_T=q^a_T$, and so (\ref{4.2}) holds true completing the proof of
(\ref{4.1}), as well. Since $G_{\sig,t}=q^a_\sig$ if $\sig<t$ by (\ref{3.15})
then (\ref{3.16}) follows from (\ref{4.1}) when $\sig<t$. If $\sig=t$ then
$G_{\sig,t}=r^a_t=r^a_\sig\leq q^a_\sig$, and so (\ref{4.1}) implies 
(\ref{3.16}) in this case, as well. Next, assume that $t<\sig=\sig_T$ and 
observe that by the construction $\{ t<\sig\}\subset\{ t<\sig_t\}$ for all $t=
0,1,...,T$. On $\{ t<\sig_t\}$ we have $(\al_t,\be_t)\in$ epi$(z_t)$. But by 
(\ref{3.20}) either $z_t=q_t^a\geq r^a_t$ or $z_t=\max(r^a_t,w_t)\geq r^a_t$
and in both cases $(\al_t,\be_t)\in$ epi$(r^a_t)$ which yields (\ref{3.16})
since $G_{\sig,t}=r^a_t$ when $t<\sig$.

Next, suppose that (b) holds true. Then (c) follows since (\ref{3.16}) is
equivalent to the seller's superhedging condition (\ref{2.5}).

Suppose now that $(\sig,\pi),\,\sig\in\cT,\,\pi=(\al_t,\be_t)_{t=0}^T$ is a
superhedging strategy for the seller such that $(\al_0,\be_0)=(\gam,\del)$.
Then (\ref{3.16}) holds true. We prove next by the backward induction that
\begin{equation}\label{4.3}
(\al_{\sig\wedge t},\be_{\sig\wedge t})\in\,\mbox{epi}(z_{\sig\wedge t}).
\end{equation}
Observe that either $G_{\sig,T}=r^a_T=z_T$ if $\sig=T$ or $G_{\sig,T}=q^a_\sig
\geq z_\sig$ if $\sig<T$. In both cases epi$(G_{\sig,T})\subset$ 
epi$(z_{\sig\wedge T})$ and (\ref{4.3}) with $t=T$ follows from (\ref{3.16}).
Now suppose that (\ref{4.3}) holds true for some
$t\in\{ 1,2,...,T\}$, i.e. $(\al_t,\be_t)\in$ epi$(z_t)$. Since $(\al_{\sig
\wedge t},\be_{\sig\wedge t})$ is $\cF_{\sig\wedge (t-1)}$-measurable it follows
by (\ref{3.20}) that $(\al_{\sig\wedge t},\be_{\sig\wedge t})\in$ epi
$(\bfz_{\sig\wedge (t-1)})$. Because the strategy is self-financing, we have by
(\ref{3.11}) that
\[
(\al_{\sig\wedge (t-1)}-\al_{\sig\wedge t},\,\be_{\sig\wedge (t-1)}-
\be_{\sig\wedge t})\in\,\mbox{epi}(h_{[S^b_{\sig\wedge (t-1)}, 
S^a_{\sig\wedge (t-1)}]}),
\]
obtaining by (\ref{3.20}) that
\[
(\al_{\sig\wedge (t-1)},\,\be_{\sig\wedge (t-1)})\in\,\mbox{epi}(h_{[S^b_{\sig
\wedge (t-1)}, S^a_{\sig\wedge (t-1)}]})+\,\mbox{epi}(\bfz_{\sig\wedge (t-1)})= 
\,\mbox{epi}(w_{\sig\wedge (t-1)}).
\]
Furthermore, by the superhedging condition (\ref{3.16}),
\[
(\al_{\sig\wedge (t-1)},\,\be_{\sig\wedge (t-1)})\in\,\mbox{epi}(G_{\sig,t-1}),
\]
and so,
\begin{eqnarray*}
&(\al_{\sig\wedge (t-1)},\,\be_{\sig\wedge (t-1)})\in\,\mbox{epi}(\max(
G_{\sig,t-1},\, w_{\sig\wedge (t-1)}))\\
&\subset\,\mbox{epi}(\max(
r^a_{\sig\wedge (t-1)},w_{\sig\wedge (t-1)}))\subset\,\mbox{epi}(
z_{\sig\wedge (t-1)})
\end{eqnarray*}
since $z_{\sig\wedge (t-1)}\leq\max(r^a_{\sig\wedge (t-1)},\, 
w_{\sig\wedge (t-1)})$ by (\ref{3.20}), completing the induction step. Now
taking $t=0$ in (\ref{4.3}) we obtain the assertion (a) proving the 
statement (i).

(ii) Suppose that $(\gam,\del)\in$ epi$(u_0)$. Set $(\al_0,\be_0)=(\gam,\del)$ 
and construct a self-financing strategy $\pi=(\al_t,\be_t)_{t=0}^T$ and
a stopping time $\tau$ inductively as described in Theorem \ref{thm3.3}(ii).
In order to obtain (\ref{3.17}) we show first that
\begin{equation}\label{4.4}
(\al_\tau,\be_\tau)\in\,\mbox{epi}(r^b_\tau).
\end{equation}
Indeed, if $\tau=\tau_T=t<T$ then by the construction $\tau_T=\tau_{T-1}=
\cdots =\tau_t,\, t-1<\tau_{t-1}$ and $(\al_t,\be_t)\in$ epi$(r^b_t)$.
If $\tau=\tau_T=T$ then we also obtain that
 \begin{equation}\label{4.5}
(\al_T,\be_T)\in\,\mbox{epi}(r^b_T)
\end{equation}
 since for otherwise by the construction we must have $\tau_{T-1}=\tau_T=T$.
 Then $T-1<\sig_{T-1}$ and on this set (event) by the construction
$(\al_T,\be_T)\in$ epi$(\bfu_{T-1})\subset$ epi$(u_T)=$epi$(r^b_T)$ since 
$u_T=r^b_T$, and so (\ref{4.5}) holds true completing the proof of
(\ref{4.4}), as well. Since $H_{s,\tau}=r^b_\tau$ if $\tau\leq s$ by 
(\ref{3.15})
then (\ref{3.17}) follows from (\ref{4.4}) when $\tau\leq s$. 
 Next, assume that $s<\tau=\tau_T$ and 
observe that by the construction $\{ s<\tau\}\subset\{ s<\tau_s\}$ for all $s=
0,1,...,T$. On $\{ s<\tau_s\}$ we have $(\al_s,\be_s)\in$ epi$(u_s)$. But 
$u_s\geq q^b_s$ by (\ref{3.22}), and so $(\al_s,\be_s)\in$ epi$(q^b_s)$ on
$\{ s<\tau_s\}$, which yields (\ref{3.17}) since $H_{s,\tau}=q^b_s$ when 
$s<\tau$.

Next, suppose that (e) holds true. Then (f) follows since (\ref{3.17}) is
equivalent to the buyer's superhedging condition (\ref{2.8}).

Suppose now that $(\tau,\pi),\,\tau\in\cT,\,\pi=(\al_t,\be_t)_{t=0}^T$ is a
superhedging strategy for the buyer such that $(\al_0,\be_0)=(\gam,\del)$.
Then (\ref{3.17}) holds true. We prove next by the backward induction that
\begin{equation}\label{4.6}
(\al_{s\wedge\tau},\be_{s\wedge\tau})\in\,\mbox{epi}(u_{s\wedge\tau}).
\end{equation}
Observe that either $H_{T,\tau}=r^b_T=u_T$ if $\tau=T$ or $H_{T,\tau}=
r^b_\tau\geq u_\tau$ if $\tau<T$. In both cases epi$(H_{T,\tau})\subset$
epi$(u_{T\wedge\tau})$ and (\ref{4.6}) with $s=T$ follows from (\ref{3.17}).
 Now suppose that (\ref{4.6}) holds true for some
$s\in\{ 1,2,...,T\}$, i.e. $(\al_s,\be_s)\in$ epi$(u_s)$. Since $(\al_{s
\wedge\tau},\be_{s\wedge\tau})$ is $\cF_{(s-1)\wedge\tau}$-measurable it 
follows by (\ref{3.22}) that $(\al_{s\wedge\tau},\,\be_{s\wedge\tau})\in$ 
epi$(\bfu_{(s-1)\wedge\tau})$. Because the strategy is self-financing, we have by
(\ref{3.11}) that
\[
(\al_{(s-1)\wedge\tau}-\al_{s\wedge\tau},\,\be_{(s-1)\wedge\tau}-
\be_{s\wedge\tau})\in\,\mbox{epi}(h_{[S^b_{(s-1)\wedge\tau}, 
S^a_{(s-1)\wedge\tau}]}),
\]
obtaining by (\ref{3.22}) that
\[
(\al_{(s-1)\wedge\tau},\,\be_{(s-1)\wedge\tau})\in\,\mbox{epi}(h_{[S^b_{(s-1)
\wedge\tau}, S^a_{(s-1)\wedge\tau}]})+\,\mbox{epi}(\bfu_{(s-1)\wedge\tau})= 
\,\mbox{epi}(v_{(s-1)\wedge\tau}).
\]
Furthermore, by the superhedging condition (\ref{3.17}),
\[
(\al_{(s-1)\wedge\tau},\,\be_{(s-1)\wedge\tau})\in\,\mbox{epi}(H_{s-1,\tau}),
\]
and so,
\begin{eqnarray*}
&(\al_{(s-1)\wedge\tau},\,\be_{(s-1)\wedge\tau})\in\,\mbox{epi}(\max(
H_{s-1,\tau},\, v_{(s-1)\wedge\tau}))\\
&\subset\,\mbox{epi}(\max(
q^b_{(s-1)\wedge\tau},v_{(s-1)\wedge\tau}))\subset\,\mbox{epi}(
u_{(s-1)\wedge\tau})
\end{eqnarray*}
since $H_{s-1,\tau}\geq q^b_{(s-1)\wedge\tau}$ and
$u_{(s-1)\wedge\tau}\leq\max(q^b_{(s-1)\wedge\tau},\, 
v_{(s-1)\wedge\tau})$ by (\ref{3.22}), completing the induction step. Now
taking $s=0$ in (\ref{4.6}) we obtain the assertion (d) proving the 
statement (ii).
\end{proof}

Now we are ready to prove Theorems \ref{thm3.1}-\ref{thm3.3}. In view of
Lemma \ref{lem4.1}(i) it follows from (\ref{2.7}) that
\begin{eqnarray}\label{4.7}
&V^a=\min\{ -\te_0(-\gam,-\del):\, (\gam,\del)\in\,\mbox{epi}(z_0)\}\\
&\leq\min\{ -\te_0(-\gam,0):\, (\gam,0)\in\,\mbox{epi}(z_0)\}\leq\min
\{\gam:\, (\gam,0)\in\,\mbox{epi}(z_0)\}=z_0(0).\nonumber
\end{eqnarray}

In order to derive the inequality in the other direction we will use
$z_{\sig,s},\, s=0,1,...,T$ constructed in Theorem \ref{thm3.2}(i) for
any $\sig\in\cT$. First, we show by the backward induction that for all
$s=0,1,...,T$,
\begin{equation}\label{4.8}
z_{\sig,s}\geq z_{\sig\wedge s}.
\end{equation}
Indeed, for $s=T$ we have $z_{\sig,T}=G_{\sig,T}\geq z_{\sig\wedge T}$
which is clear from the definitions when $\sig=T$ while if $\sig<T$ then
$z_{\sig,T}=q^a_\sig\geq z_\sig$. Suppose that (\ref{4.8}) holds true for
$s=T,T-1,...,t$. In order to prove it for $s=t-1$ we observe that
by (\ref{3.20}),
\[
z_{\sig,t-1}\geq G_{\sig,t-1}=q^a_{\sig}\geq z_\sig=z_{\sig\wedge(t-1)}
\]
on the set $\{\sig<t-1\}$. On the other hand, on the set $\{\sig\geq t-1\}$
by the definition and the induction hypothesis $\bf z_{\sig,t-1}\geq
\bf z_{t-1}$. This together with the monotonicity of the operator
gr$_{[d,c]}$ yields that $w_{\sig,t-1}\geq w_{t-1}$. Hence, on $\{\sig\geq
t-1\}$ by (\ref{3.20}),
\[
z^\mu_{t-1}\leq\max(r^a_{t-1}(\mu,x),\, w^\mu_{t-1}(x))\leq\max( G_{\sig,t-1}
(\mu,r),\, w^\mu_{\sig,t-1})=z^\mu_{\sig,t-1}
\]
completing the induction step, and so (\ref{4.8}) is valid for all 
$s=0,1,...,T$.

For each $\sig\in\cT$ set 
\begin{equation}\label{4.9}
V^a_\sig=\inf_{\pi}\{-\te_0(-\al_0,-\be_0):\, (\sig,\pi)\,\,\mbox{with}\,
\pi=(\al_t,\be_t)_{t=0}^T\,\mbox{being a superhedging strategy for the seller}\}.
\end{equation}
It is easy to see that $V^a_\sig$ is the seller's (ask) price of the
American option with the payoff process $Q_{\sig,t}=(Q_{\sig,t}^{(1)},
Q_{\sig,t}^{(2)}),\, t=0,1,...,T$ (see (\ref{2.4})) and the same transactions
costs setup as before. Hence, we can rely on Theorem 3.3 from \cite{RZ} in
order to claim that
\begin{eqnarray}\label{4.10}
&V^a_\sig=\max_{\chi\in\cX}\max_{(P,S)\in\bar\cP(\chi)}
\bbE_P(Q_{\sig,\cdot}^{(1)}+SQ_{\sig,\cdot}^{(2)})_\chi\\
&=\max_{\chi\in\cX}\sup_{(P,S)\in\cP(\chi)}
\bbE_P(Q_{\sig,\cdot}^{(1)}+SQ_{\sig,\cdot}^{(2)})_\chi=z_{\sig,0}(0).
\nonumber\end{eqnarray}
Combining (\ref{2.7}), (\ref{4.8}) and (\ref{4.10}) we obtain that
\begin{eqnarray}\label{4.11}
&V^a=\min_{\sig\in\cT}V^a_\sig=\min_{\sig\in\cT}\max_{\chi\in\cX}
\max_{(P,S)\in\bar\cP(\chi)}\bbE_P(Q_{\sig,\cdot}^{(1)}+
SQ_{\sig,\cdot}^{(2)})_\chi\\
&=\min_{\sig\in\cT}\max_{\chi\in\cX}\sup_{(P,S)\in\cP(\chi)}
\bbE_P(Q_{\sig,\cdot}^{(1)}+SQ_{\sig,\cdot}^{(2)})_\chi
=\min_{\sig\in\cT}z_{\sig,0}(0)\geq z_0(0).
\nonumber\end{eqnarray}
Now (\ref{4.7}) together with (\ref{4.11}) yields both (\ref{3.9}) from
Theorem \ref{thm3.1} and the conclusion of Theorem \ref{thm3.2}(i).

It follows from the equality $V^a=z_0(0)$ that $(V^a,0)\in\,$epi$(z_0)$.
On the other hand, we showed already in the proof of Lemma \ref{lem4.1}(i)
that any pair $(\sig,\pi)$ of $\sig\in\cT$ and $\pi=(\al_t,\be_t)_{t=0}^T\in
\Phi$ with $(\al_0,\be_0)\in\,$epi$(z_0)$ constructed by the algorithm of
Theorem \ref{thm3.3}(i) is a superhedging strategy for the seller. Hence,
the conclusion of Theorem \ref{thm3.3}(i) follows, completing the proof of
our results concerning the seller.  \qed

In order to obtain our results concerning the buyer we will rely on the
duality of the seller and the buyer positions in the setup of game options.
Indeed, since negative payoffs are also allowed we can view now the buyer
as a seller of the new game option with the payoff (2-vector) function
\[
\hat Q_{s,t}=(\hat Q_{s,t}^{(1)},\, \hat Q_{s,t}^{(2)})=-X_s\bbI_{s<t}-
Y_s\bbI_{t\leq s}=-Q_{s,t}
\]
while the former seller becomes a buyer of this new option. We observe 
the slight difference here that when $s=t$ the payoff $-Y_s\geq -X_s$
includes "the penalty" but this convention does not influence the results.
Now we see that replacing $Q^{(1)}_{s,\tau}$ and $Q^{(2)}_{s,\tau}$ by
$\hat Q^{(1)}_{s,\tau}$ and $\hat Q^{(2)}_{s,\tau}$ in the superhedging 
condition (\ref{2.8}) transforms it into the form (\ref{2.5}) and writing
(\ref{2.8}) for $\hat V^b=-V^b$ we transform (\ref{2.8}) into the form 
(\ref{2.7}). Next, $H_{s,t}$ in (\ref{3.15}) will have the form
of $G_{s,t}$ there if we replace  $Q_{s,t}^{(1)}$ and $Q_{s,t}^{(2)}$ 
there by $\hat Q_{s,t}^{(1)}$ and $\hat Q_{s,t}^{(2)}$, respectively.
Furthermore, (\ref{3.10}) will have the form of (\ref{3.9}) if we rewrite
it for $\hat V^b,\,\hat Q_{\cdot,\tau}^{(1)},\,\hat Q_{\cdot,\tau}^{(2)}$
in place of $V^b,\, Q_{\cdot,\tau}^{(1)},\, Q_{\cdot,\tau}^{(2)}$, 
respectively. Taking into account these arguments we derive (\ref{3.10})
and the assertions (ii) of Theorems \ref{thm3.2} and \ref{thm3.3} from
(\ref{3.9}) and from the corresponding assertions (i) of these theorems. \qed

\section{Shortfall risk and partial hedging}\label{sec5}

In this section we discuss the shortfall risk for game options with transaction
     costs. The shortfall risk of the seller using a cancellation stopping time
     $\sig$ and a self-financing strategy $\pi=(\al_t,\be_t)_{t=0}^T$ is
     defined by
     \begin{equation}\label{5.1}
     R^a(\sig,\pi)=\sup_{\tau\in\cT}\bbE_\bbP\big(\te_{\sig\wedge\tau}
     (\al_{\sig\wedge\tau}-Q^{(1)}_{\sig,\tau},\,
     \be_{\sig\wedge\tau}-Q^{(2)}_{\sig,\tau})\big)^-
     \end{equation}
     where $a^-=-\min(a,0)$. The shortfall risk of the seller wishing to
     spend no more than an amount $x$ in order to set his (partial) hedging
     portfolio is defined by
     \begin{equation}\label{5.2}
     R^a(x)=\inf_{\sig\in\cT,\,\pi}\{ R(\sig,\pi):\,\pi=(\al_t,\be_t)_{t=0}^T\,\,
     \mbox{is self-financing and}\,\, -\te_0(-\al_0,-\be_0)\leq x\}.
     \end{equation}
     
     Similarly, we can define the shortfall risk of the buyer using an
     exercise stopping time $\tau$ and a self-financing portfolio strategy
     $\pi=(\al_t,\be_t)_{t=0}^T$ by
     \begin{equation}\label{5.3}
     R^b(\tau,\pi)=\sup_{\sig\in\cT}\bbE_\bbP\big(\te_{\sig\wedge\tau}
     (\al_{\sig\wedge\tau}+Q^{(1)}_{\sig,\tau},\,
     \be_{\sig\wedge\tau}+Q^{(2)}_{\sig,\tau})\big)^-
     \end{equation}
     where $a^-=-\min(a,0)$. The shortfall risk of the buyer who takes
     a loan in the amount at least $x$ to pay it for the option and to
     set a (partial) hedging portfolio is defined by
     \begin{equation}\label{5.4}
     R^b(x)=\inf_{\tau\in\cT,\,\pi}\{ R^b(\tau,\pi):\,\pi=
     (\al_t,\be_t)_{t=0}^T\,\,\mbox{is self-financing and}\,\, 
     \te_0(-\al_0,-\be_0)\geq x\}.
     \end{equation}
     The shortfall risk of the buyer measures the maximal expected amount
     which may be needed to settle buyer's bank loan in addition to his 
     portfolio liquidation value and the option payoff he receives from 
     the seller.
     
     Observe that usually $V^a>V^b$ and in this case if the buyer agrees
     to pay for the option the upper hedging price $V^a$ (or any price
     higher than $V^b$) he should take into account the shortfall risk
     while the same is true concerning the seller if he agrees to sell
     the option for the lower hedging price (or any price lower than $V^a$).
     It is not difficult to see that in the study of the shortfall risk pure
     stopping times 
     suffice and there is no need to deal with randomized ones which would
     lead actually to the same values of the shorfall risk in view of linearity
     of corresponding expressions. Some dynamical programming algorithms for
     computation of shortfall risks and corresponding partial hedging
     strategies can be constructed for game options with proportional
     transaction costs in finite discrete markets similarly to \cite{Do1}
     where such algorithms were described for binomial models of American
     options but their extension to game options is also possible 
     (see Remark 6.3 in \cite{Do1}) which was done for frictionless 
     multinomial markets in \cite{DK}.


\end{document}